\begin{document}


\title{Sum Capacity of Multi-source Linear Finite-field Relay Networks With Fading}
\author{\authorblockN{Sang-Woon Jeon and Sae-Young Chung}
\authorblockA{School of EECS, KAIST, Daejeon, Korea\\
Email: swjeon@kaist.ac.kr, sychung@ee.kaist.ac.kr} }
\maketitle


\newtheorem{definition}{Definition}
\newtheorem{theorem}{Theorem}
\newtheorem{lemma}{Lemma}
\newtheorem{example}{Example}
\newtheorem{corollary}{Corollary}
\newtheorem{proposition}{Proposition}
\newtheorem{conjecture}{Conjecture}
\newtheorem{remark}{Remark}

\def \diag{\operatornamewithlimits{diag}}
\def \min{\operatornamewithlimits{min}}
\def \max{\operatornamewithlimits{max}}
\def \log{\operatorname{log}}
\def \max{\operatorname{max}}
\def \rank{\operatorname{rank}}
\def \out{\operatorname{out}}
\def \exp{\operatorname{exp}}
\def \arg{\operatorname{arg}}
\def \E{\operatorname{E}}
\def \tr{\operatorname{tr}}
\def \SNR{\operatorname{SNR}}
\def \SINR{\operatorname{SINR}}
\def \dB{\operatorname{dB}}
\def \ln{\operatorname{ln}}
\def \th{\operatorname{th}}

\begin{abstract}
We study a fading linear finite-field relay network having multiple source-destination pairs.
Because of the interference created by different unicast sessions, the problem of finding its capacity region is in general difficult.
We observe that, since channels are time-varying, relays can deliver their received signals by waiting for appropriate channel realizations such that the destinations can decode their messages without interference.
We propose a block Markov encoding and relaying scheme that exploits such channel variations.
By deriving a general cut-set upper bound and an achievable rate region, we characterize the sum capacity for some classes of channel distributions and network topologies.
For example, when the channels are uniformly distributed, the sum capacity is given by the minimum average rank of the channel matrices constructed by all cuts  that separate the entire sources and destinations.
We also describe other cases where the capacity is characterized.
\end{abstract}

\section{Introduction} \label{sec:intro}
Characterizing the capacity region of wireless relay networks is one of the fundamental problems.
However, if the network has multiple unicast sessions, the problem of finding its capacity region becomes much more challenging since the transmission of other sessions acts as interference and, in general, the cut-set upper bound is not tight.
Even for the two-user Gaussian interference channel, an approximate capacity region was recently characterized \cite{Etkin:08}.

For wireless networks, there exist three fundamental issues, i.e., broadcast, interference, and fading.
In this paper, we consider a multi-source fading linear finite-field relay network, which captures these three key characteristics of wireless environment.
There have been related works dealing with wireless networks assuming interference-free receptions \cite{Niranjan:06,Amir:06} and assuming no broadcast nature \cite{Smith:07}.
The works in \cite{AvestimehrDiggaviTse:07,Mohajer:08} have considered deterministic relay networks and the work in \cite{Bhadra:06} has studied finite-field erasure networks.

Since the channels are time-varying, destinations can decode their messages without interference by transmitting them through a series of particular channel instances during multihop transmission.
As an example, consider the binary-field two-hop network in Fig. \ref{FIG:interference_alignment_mitigation}.
The symbol in each node denotes the transmit bit of that node, where $s_k$ denotes the information bit of the $k$-th source.
We notice the interference-free reception is possible if $\mathbf{H}_1\mathbf{H}_2=\mathbf{I}$, where $\mathbf{H}_1$ and $\mathbf{H}_2$ denote the channel instances of the first and second hop, respectively.
The works in \cite{Viveck1:08,Nazer:09,JeonITA:09} have also shown that, by using particular channel instances jointly, one can improve achievable rates of single-hop networks.
Based on this key observation, we derive an achievable rate region for general linear finite-field relay networks.
By comparing the achievable rate region with a cut-set upper bound, we characterize the sum capacity for some classes of channel distributions and network topologies.

\begin{figure}[t!]
  \begin{center}
  \scalebox{0.9}{\includegraphics{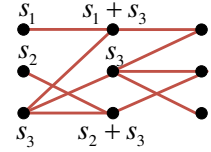}}
  \caption{Interference mitigation for two-hop networks.}
  \label{FIG:interference_alignment_mitigation}
  \end{center}
\end{figure}

\section{System Model} \label{sec:sys_model}
\subsection{Linear Finite-field Relay Networks}
We study a multi-source layered network in which the network consists of $M+1$ layers having $K_m$ nodes at the $m$-th layer, where $m\in\{1,\cdots,M+1\}$.
Let the $(k,m)$-th node denote the $k$-th node at the $m$-th layer.
The $(k,1)$-th node and the $(k,M+1)$-th node are the source and the destination of the $k$-th source-destination (S-D) pair, respectively.
Thus $K=K_1=K_{M+1}$ is the number of S-D pairs.
We define $K_{\operatorname{max}}=\max_m\{K_m\}$ and $K_{\min}=\min_m \{K_m\}$.

Consider the $m$-th hop transmission.
The $(i,m)$-th node and the $(j,m+1)$-th node become the $i$-th transmitter (Tx) and the $j$-th receiver (Rx) of the $m$-th hop, respectively, where $i\in\{1,\cdots,K_m\}$ and $j\in\{1,\cdots,K_{m+1}\}$.
Let $x_{i,m}[t]\in \mathbb{F}_2$ denote the transmit signal of the $(i,m)$-th node at time $t$ and let $y_{j,m}[t]\in \mathbb{F}_2$ denote the received signal of the $(j,m+1)$-th node at time $t$\footnote{We focus on the binary field $\mathbb{F}_2$ in this paper, but some results can be directly extended to $\mathbb{F}_q$ (see Remark \ref{RE:q_ary}).}.
Let $h_{j,i,m}[t]\in \mathbb{F}_2$ be the channel from the $(i,m)$-th node to the $(j,m+1)$-th node at time $t$.
Then the relation between the transmit and received signals is given by
\begin{equation}
y_{j,m}[t]=\sum_{i=1}^{K_m} h_{j,i,m}[t]x_{i,m}[t],
\end{equation}
where all operations are performed over $\mathbb{F}_2$.
We assume time-varying channels such that $\Pr(h_{j,i,m}[t]=1)=p_{j,i,m}$ and $h_{j,i,m}[t]$ are independent from each other with different $i$, $j$, $m$, and $t$.
Let $\mathbf{x}_m[t]$ and $\mathbf{y}_m[t]$ be the $K_m\times1$ transmit signal vector and $K_{m+1}\times1$ received signal vector at the $m$-th hop, respectively, where $\mathbf{x}_m[t]=\left[x_{1,m}[t],\cdots,x_{K_m,m}[t]\right]^T$, $\mathbf{y}_m[t]=\left[y_{1,m}[t],\cdots,y_{K_{m+1},m}[t]\right]^T$.
Thus the transmission at the $m$-th hop can be represented as
\begin{equation}
\mathbf{y}_m[t]=\mathbf{H}_m[t]\mathbf{x}_m[t],
\end{equation}
where $\mathbf{H}_m[t]$ is the $K_{m+1}\times K_m$ channel matrix of the $m$-th hop having $h_{j,i,m}[t]$ as the $(j,i)$-th element.
We assume that at time $t$ both Txs and Rxs of the $m$-th hop know $\mathbf{H}_1[t]$ through $\mathbf{H}_m[t]$.

\subsection{Problem Statement}
Consider a set of length $n$ block codes.
Let $W_k$ be the message of the $k$-th source uniformly distributed over $\{1,2,\cdots,2^{nR_k}\}$, where $R_k$ is the rate of the $k$-th source.
For simplicity, we assume that $nR_k$ is an integer.
A $\left(2^{nR_1},\cdots,2^{nR_K};n\right)$ code consists of the following encoding, relaying, and decoding functions.

\begin{itemize}
\item (Encoding) For $k\in\{1,\cdots,K\}$, the set of encoding functions of the $k$-th source is given by $\{f_{k,1,t}\}_{t=1}^n:\{1,\cdots,2^{nR_k}\}\to \mathbb{F}_2^n$
such that $x_{k,1}[t]=f_{k,1,t}(W_k)$, where $t\in\{1,\cdots,n\}$.
\item (Relaying) For $m\in\{2,\cdots,M\}$ and $k\in\{1,\cdots,K_m\}$, the set of relaying functions of the $(k,m)$-th node is given by
$\{f_{k,m,t}\}_{t=1}^n:\mathbb{F}_2^n\to \mathbb{F}_2^n$
such that $x_{k,m}[t]=f_{k,m,t}\left(y_{k,m-1}[1],\cdots, y_{k,m-1}[t-1]\right)$, where $t\in\{1,\cdots,n\}$.
\item (Decoding) For $k\in\{1,\cdots,K\}$, the decoding function of the $k$-th destination is given by
$g_k:\mathbb{F}_2^n\to\{1,\cdots,2^{nR_k}\}$
such that $\hat{W}_k=g_k\left(y_{k,M}[1],\cdots,y_{k,M}[n]\right)$.
\end{itemize}

If $M=1$, the sources transmit directly to the intended destinations without relays.
The probability of error at the $k$-th destination is given by $P^{(n)}_{e,k}=\Pr(\hat{W}_k\neq W_k)$.
A set of rates $\left(R_1,\cdots,R_K\right)$ is said to be achievable if there exists a sequence of $(2^{nR_1},\cdots,2^{nR_K};n)$ codes with $P^{(n)}_{e,k}\to 0$ as $n\to\infty$ for all $k\in\{1,\cdots,K\}$.
The achievable sum-rate is given by $R_{\operatorname{sum}}=\sum_{k=1}^{K}R_k$ and the sum capacity is the supremum of the achievable sum-rates.

\subsection{Notations}
\subsubsection{Notations for directed graphs}
The considered network can be represented as a directed graph $\mathcal{G}=(\mathcal{V},\mathcal{E})$ consisting of a vertex set $\mathcal{V}$ and a directed edge set $\mathcal{E}$.
Let $v_{k,m}$ denote the $(k,m)$-th node and $\mathcal{V}_m=\{v_{k,m}\}_{k=1}^{K_m}$ denote the set of nodes in the $m$-th layer.
Then $\mathcal{V}$ is given by $\cup_{m\in\{1,\cdots,M+1\}}\mathcal{V}_m$.
There exists a directed edge $(v_{i,m},v_{j,m+1})$ from $v_{i,m}$ to $v_{j,m+1}$ if $p_{j,i,m}>0$.
Let $\mathbf{H}_{\mathcal{V}',\mathcal{V}''}$ be the channel matrix from the nodes in $\mathcal{V}'\subseteq \mathcal{V}$ to the nodes in $\mathcal{V}''\subseteq \mathcal{V}$.

\subsubsection{Sets of channel instances and nodes}
Suppose $\bar{\mathcal{V}}'\subseteq\mathcal{V}'$, $\bar{\mathcal{V}}''\subseteq\mathcal{V}''$, and $\mathbf{G}$ is a $|\bar{\mathcal{V}}''|\times |\bar{\mathcal{V}}'|$ matrix.
We define the following set:
\begin{eqnarray}
\!\!\!\!\!\!\!\!\!\!\!&&\mathcal{H}^F_{\mathcal{V}',\mathcal{V}''}\left(\mathbf{G}, \bar{\mathcal{V}}',\bar{\mathcal{V}}''\right)=\{\mathbf{H}_{\mathcal{V}',\mathcal{V}''}\big|\mathbf{H}_{\bar{\mathcal{V}}',\bar{\mathcal{V}}''}=\mathbf{G}, \nonumber\\
\!\!\!\!\!\!\!\!\!\!\!&&{~~~~}\operatorname{rank}(\mathbf{H}_{\mathcal{V}',\mathcal{V}''})=\operatorname{rank}(\mathbf{G}),\mathbf{H}_{\mathcal{V}',\mathcal{V}''}\in \mathbb{F}_2^{|\mathcal{V}''|\times|\mathcal{V}'|}\},
\end{eqnarray}
i.e., $\mathcal{H}^F_{\mathcal{V}',\mathcal{V}''}\left(\mathbf{G}, \bar{\mathcal{V}}',\bar{\mathcal{V}}''\right)$ is the set of all instances of $\mathbf{H}_{\mathcal{V}',\mathcal{V}''}$ that contain $\mathbf{G}$ in $\mathbf{H}_{\bar{\mathcal{V}}',\bar{\mathcal{V}}''}$ and have the same rank as $\mathbf{G}$.
We further define the following sets:
\begin{eqnarray}
\!\!\!\!\!\!\!\!\!\!\!\!\!\!\!\!\!\!\!\!&&\mathcal{V}(a,b,\mathcal{V}',\mathcal{V}'')=\big\{(\bar{\mathcal{V}}',\bar{\mathcal{V}}'')\big||\bar{\mathcal{V}}'|=a,|\bar{\mathcal{V}}''|=b,\nonumber\\
\!\!\!\!\!\!\!\!\!\!\!\!\!\!\!\!\!\!\!\!&&{~~~~~~~~~~~~~~~~~~~~~~~}\bar{\mathcal{V}}'\subseteq \mathcal{V}',\bar{\mathcal{V}}''\subseteq \mathcal{V}'' \big\},\nonumber\\
\!\!\!\!\!\!\!\!\!\!\!\!\!\!\!\!\!\!\!\!&&\mathcal{V}\left(\mathbf{H}_{\mathcal{V}',\mathcal{V}''}\right)=\big\{(\bar{\mathcal{V}}',\bar{\mathcal{V}}'')\big||\bar{\mathcal{V}}'|=|\bar{\mathcal{V}}''|=\rank(\mathbf{H}_{\mathcal{V}',\mathcal{V}''}),\nonumber\\
\!\!\!\!\!\!\!\!\!\!\!\!\!\!\!\!\!\!\!\!&&{~~}\rank(\mathbf{H}_{\bar{\mathcal{V}}',\bar{\mathcal{V}}''})=\rank(\mathbf{H}_{\mathcal{V}',\mathcal{V}''}),\bar{\mathcal{V}}'\subseteq \mathcal{V}',\bar{\mathcal{V}}''\subseteq \mathcal{V}''\big\},
\end{eqnarray}
where $a$ and $b$ are positive integers satisfying $a\leq |\mathcal{V}'|$ and $b\leq |\mathcal{V}''|$.
We define $\mathcal{V}\left(\mathbf{H}_{\mathcal{V}',\mathcal{V}''}\right)=\phi$ if $\rank(\mathbf{H}_{\mathcal{V}',\mathcal{V}''})=0$.
The set $\mathcal{V}(a,b,\mathcal{V}',\mathcal{V}'')$ consists of all $(\bar{\mathcal{V}}',\bar{\mathcal{V}}'')$ such that the numbers of nodes in $\bar{\mathcal{V}}'$ and in $\bar{\mathcal{V}}''$ are equal to $a$ and $b$, respectively.
The set $\mathcal{V}\left(\mathbf{H}_{\mathcal{V}',\mathcal{V}''}\right)$ consists of all $(\bar{\mathcal{V}}',\bar{\mathcal{V}}'')$ that $\mathbf{H}_{\bar{\mathcal{V}}',\bar{\mathcal{V}}''}$ is a full rank matrix and has the same rank as $\mathbf{H}_{\mathcal{V}',\mathcal{V}''}$.

\section{Cut-set Upper Bound} \label{sec:converse}
In this section, we introduce a sum-rate upper bound, which is derived from the general cut-set upper bound in \cite{JeonITA:09}.

\begin{theorem} \label{THM:cut_set}
Suppose a linear finite-field relay network. The achievable sum-rate is upper bounded by
\begin{equation}
R_{\operatorname{sum}}\leq\min_{m\in\{1,\cdots,M\}}\mathbb{E}(\rank(\mathbf{H}_m)).
\label{EQ:converse_multi}
\end{equation}
\end{theorem}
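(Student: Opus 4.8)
The plan is to specialize the general cut-set bound of \cite{JeonITA:09} to the $M$ natural \emph{layer cuts}, where the $m$-th cut places layers $1$ through $m$ on the source side and layers $m+1$ through $M+1$ on the destination side. Because the network is layered, the only edges crossing the $m$-th cut belong to the $m$-th hop, so the information the destinations can extract about all messages is funneled entirely through the received vectors $\mathbf{y}_m[1],\ldots,\mathbf{y}_m[n]$ at layer $m+1$. Establishing each of the $M$ bounds $R_{\operatorname{sum}}\leq\mathbb{E}(\rank(\mathbf{H}_m))$ separately and then taking the minimum over $m$ yields \eqref{EQ:converse_multi}.

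First I would fix a hop index $m$ and condition throughout on the full channel realization $\mathcal{H}$ of every hop (known to all nodes and independent of the messages). Since $\hat{W}_k=g_k(y_{k,M}[1],\ldots,y_{k,M}[n])$ and, tracing the relaying functions backward through hops $M,M-1,\ldots,m+1$, the entire layer-$(m{+}1)$-and-beyond portion of the network depends on the messages only through $\mathbf{y}_m[1],\ldots,\mathbf{y}_m[n]$, we obtain the (conditional) Markov chain $(W_1,\ldots,W_K)\to(\mathbf{y}_m[1],\ldots,\mathbf{y}_m[n])\to(\hat{W}_1,\ldots,\hat{W}_K)$ given $\mathcal{H}$. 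Combining Fano's inequality with the data-processing inequality then gives $nR_{\operatorname{sum}}\leq I\!\left(W_1,\ldots,W_K;\mathbf{y}_m[1],\ldots,\mathbf{y}_m[n]\mid\mathcal{H}\right)+n\epsilon_n$ with $\epsilon_n\to0$, so the sum-rate across the cut is controlled by the output entropy of the $m$-th hop.

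The core finite-field step is the rank bound on this entropy. I would write $I(\,\cdot\,)\leq H(\mathbf{y}_m[1],\ldots,\mathbf{y}_m[n]\mid\mathcal{H})\leq\sum_{t=1}^{n}H(\mathbf{y}_m[t]\mid\mathbf{H}_m[t])$, and then use that $\mathbf{y}_m[t]=\mathbf{H}_m[t]\mathbf{x}_m[t]$ always lies in the column space of $\mathbf{H}_m[t]$, which over $\mathbb{F}_2$ contains at most $2^{\rank(\mathbf{H}_m[t])}$ elements; hence $H(\mathbf{y}_m[t]\mid\mathbf{H}_m[t]=H_0)\leq\rank(H_0)$. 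Averaging over the identically distributed channel gives $H(\mathbf{y}_m[t]\mid\mathbf{H}_m[t])\leq\mathbb{E}(\rank(\mathbf{H}_m))$, and summing over $t$, dividing by $n$, and letting $n\to\infty$ produces $R_{\operatorname{sum}}\leq\mathbb{E}(\rank(\mathbf{H}_m))$ for each $m$. The main obstacle is the reduction in the second step rather than the rank estimate: one must argue carefully, using the layered topology and the explicit relaying functions, that \emph{all} destination-side information genuinely passes through the single hop's received signals, so that the clean single-hop entropy bound is legitimate; once that funneling is justified, the rank-to-entropy inequality is the short finite-field observation that finishes the argument, and minimizing over $m\in\{1,\ldots,M\}$ completes the proof.
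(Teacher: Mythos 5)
Your proposal is correct and takes essentially the approach the paper intends: the paper's ``proof'' of Theorem~\ref{THM:cut_set} is only a citation to the general cut-set bound of \cite{JeonITA:09}, and your argument --- specializing to the $M$ layer cuts, using the layered structure to get the Markov chain $(W_1,\ldots,W_K)\to(\mathbf{y}_m[1],\ldots,\mathbf{y}_m[n])\to(\hat{W}_1,\ldots,\hat{W}_K)$ given the channel realizations, then Fano plus data processing plus the observation that $\mathbf{y}_m[t]$ lives in the column space of $\mathbf{H}_m[t]$ so its conditional entropy is at most $\rank(\mathbf{H}_m[t])$ bits --- is exactly the cut-set derivation being imported, written out in a self-contained way. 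The one point worth stating explicitly in a polished write-up is that the rank-to-entropy bound holds for \emph{any} input distribution, including inputs that depend on past channel states (since the bound is conditioned on $\mathbf{H}_m[t]$ pointwise), which is what makes the argument valid despite the nodes' causal channel knowledge.
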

\begin{proof}
We refer readers to \cite{JeonITA:09}.
\end{proof}
Let us define $m_0=\arg\min_m\mathbb{E}(\rank(\mathbf{H}_m))$, which is the bottleneck-hop for the entire multihop transmission\footnote{Ties are broken arbitrarily.}.

\section{Transmission Scheme} \label{sec:achievability_multi}
In this section, we propose a transmission scheme for linear finite-field relay networks when $M\geq2$.
We refer to the results in \cite{JeonITA:09} for the single-hop case.

As mentioned before, due to the time-varying nature of wireless channels, information bits can be transmitted through particular instances from $\mathbf{H}_1$ to $\mathbf{H}_M$ such that the corresponding destinations receive information bits without interference.
That is, information bits are transmitted using time indices $t_1,\cdots,t_M$ such that $\mathbf{H}_1[t_1]=\mathbf{H}_1,\cdots,\mathbf{H}_M[t_M]=\mathbf{H}_M$.
A block Markov encoding and relaying structure makes a series of pairing from $\mathbf{H}_1$ to $\mathbf{H}_M$ possible.
We first study $2$-$2$-$2$ networks and then extend the idea to general linear finite-field relay networks.

\subsection{$2$-$2$-$2$ Networks}

\begin{figure}[t!]
  \begin{center}
  \scalebox{0.72}{\includegraphics{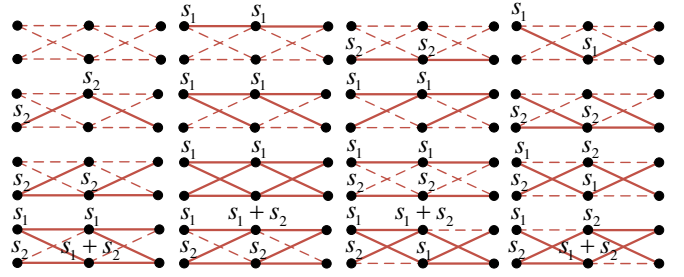}}
  \caption{Deterministic pairing of $\mathbf{H}_1$ and $\mathbf{H}_2$.}
  \label{FIG:deterministic_mapping_2hop}
  \end{center}
\end{figure}

Consider $2$-$2$-$2$ networks with $p_{j,i,m}=1/2$ for all $i$, $j$, and $m$.
There are $16$ possible instances for each $\mathbf{H}_1[t]$ and $\mathbf{H}_2[t]$.
For each time $t$, if information bits are transmitted through $\mathbf{H}_1[t]$ and $\mathbf{H}_2[t+1]$, there exist $256$ possible instances from $\mathbf{H}_1[t]$ to $\mathbf{H}_2[t+1]$ and $R_{\operatorname{sum}}=\mathbb{E}(\rank(\mathbf{H}_1\mathbf{H}_2))=\frac{177}{256}$ is achievable in this case.

However, we can get an achievable sum-rate higher than $\frac{177}{256}$ by appropriately pairing $\mathbf{H}_1$ and $\mathbf{H}_2$.
Fig. \ref{FIG:deterministic_mapping_2hop} illustrates the deterministic pairing of $\mathbf{H}_1$ and $\mathbf{H}_2$ and related encoding and relaying, where the dashed lines and the solid lines denote the corresponding channels are zeros and ones, respectively.
The symbols in the figure denote the transmit bits of the nodes and the nodes with no symbol transmit zeros, where $s_k$ denotes the information bit of the $k$-th source.
This deterministic pairing achieves $R_{\operatorname{sum}}=\mathbb{E}(\rank(\mathbf{H}_1))=\frac{21}{16}$, which coincides with the upper bound in (\ref{EQ:converse_multi}).
Thus, this simple scheme achieves the sum capacity.

Based on the deterministic pairing in Fig. \ref{FIG:deterministic_mapping_2hop}, we characterize the sum capacity for more general channel distributions.

\begin{theorem} \label{THM:2_2_2networks}
Suppose a linear finite-field relay network with $M=2$ and $K_1=K_2=K_3=2$.
Then the sum capacity is characterized for the following cases.
\begin{itemize}
\item Symmetric channel satisfying $p_{1,1,1}=p_{2,2,1}=p_{1,1,2}=p_{2,2,2}$ and $p_{2,1,1}=p_{1,2,1}=p_{2,1,2}=p_{1,2,2}$.
\item $Z$ channel satisfying $p_{2,1,1}=p_{2,1,2}=0$ with $p_{1,1,1}=p_{2,2,2}$, $p_{2,2,1}=p_{1,1,2}$ and $p_{1,2,1}=p_{1,2,2}$.
\item $\Pi$ channel satisfying $p_{1,1,1}=p_{2,2,2}=0$ with $p_{2,1,1}=p_{1,2,2}$, $p_{1,2,1}=p_{2,1,2}$ and $p_{2,2,1}=p_{1,1,2}$.
\end{itemize}
\end{theorem}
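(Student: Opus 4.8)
The converse is immediate: for each of the three channel families Theorem~\ref{THM:cut_set} gives $R_{\operatorname{sum}}\le\min\{\mathbb{E}(\rank(\mathbf{H}_1)),\mathbb{E}(\rank(\mathbf{H}_2))\}$, so it suffices to produce a scheme approaching this minimum. My first step is to show that each symmetry hypothesis forces the two hops to be balanced, $\mathbb{E}(\rank(\mathbf{H}_1))=\mathbb{E}(\rank(\mathbf{H}_2))$, reducing the target to the single number $\mathbb{E}(\rank(\mathbf{H}_1))$. I would verify this by enumerating the $2\times2$ instances of each $\mathbf{H}_m$ over $\mathbb{F}_2$ and writing $\mathbb{E}(\rank(\mathbf{H}_m))$ as an explicit polynomial in the entry probabilities; for the symmetric family the two hops are then identically distributed, while for the $Z$ and $\Pi$ families the matrices are triangular (respectively anti-triangular) and the stated equalities among the $p_{j,i,m}$ make the polynomial invariant under the swap of the two hops. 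This part is routine bookkeeping.

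For achievability I would generalize the deterministic pairing of Fig.~\ref{FIG:deterministic_mapping_2hop}. The block-Markov structure lets each relay buffer what it collects on the first hop and release it only when a compatible second-hop instance appears, so the timings of the two hops decouple and the cascade behaves, on a matched slot-pair $(t_1,t_2)$, like the single linear map $\mathbf{H}_2[t_2]\mathbf{H}_1[t_1]$. The design goal is to make this composite map a (partial) permutation, so that destination $k$ sees only source $k$. When $\mathbf{H}_1[t_1]$ has rank $2$, hence is invertible over $\mathbb{F}_2$, the relays jointly hold an invertible encoding of $(s_1,s_2)$, and I would pair it with a second-hop instance for which $\mathbf{H}_2[t_2]$ equals $\mathbf{H}_1[t_1]^{-1}$ or its product with the swap, delivering both bits cleanly; alternatively the same content can be split over two rank-$1$ second-hop slots, which enlarges the set of admissible pairings. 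When $\mathbf{H}_1[t_1]$ has rank $1$ I would activate a single source (sending a known symbol from the other) so that the relays hold one clean combination, and route it through a second-hop instance that forwards it to the intended destination and suppresses the other. For the $Z$ and $\Pi$ families the triangular structure makes these pairing rules especially rigid, and I would record them case by case exactly in the style of Fig.~\ref{FIG:deterministic_mapping_2hop}.

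The decisive step is to show the pairing is \emph{rate-feasible}: that compatible second-hop instances arrive often enough for the relay buffers to stay bounded while the long-run delivery rate reaches $\mathbb{E}(\rank(\mathbf{H}_1))$. I would phrase this as a flow-conservation condition on a time-expanded graph in which first-hop slots create buffered content at the relays (at average rate $\mathbb{E}(\rank(\mathbf{H}_1))$) and second-hop slots drain it (at average rate $\mathbb{E}(\rank(\mathbf{H}_2))$). Because interference-free delivery forces each first-hop instance to be matched only with its inverting second-hop instances, the drain is governed by a bipartite matching between first-hop types and second-hop types; a law-of-large-numbers argument turns its asymptotic feasibility into a supply-equals-demand (Hall-type) condition, which I expect to hold precisely because of the balance $\mathbb{E}(\rank(\mathbf{H}_1))=\mathbb{E}(\rank(\mathbf{H}_2))$ and the per-type symmetry of the three families.

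The main obstacle I anticipate is this matching/feasibility step rather than the construction of the pairing itself. The difficulty is that the relays cannot cooperate (relay $j$ processes only its own received stream), so the composite map can be made a permutation only for matched pairs of instances; checking that the constrained matching still runs at the full rate $\mathbb{E}(\rank(\mathbf{H}_1))$---rather than at the strictly smaller one-shot cascade rate $\mathbb{E}(\rank(\mathbf{H}_1\mathbf{H}_2))$ exemplified by $\tfrac{177}{256}<\tfrac{21}{16}$ in the uniform case---is exactly where the three symmetry hypotheses must be used in full, and is the part I would write most carefully.
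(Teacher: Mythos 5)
The paper itself gives no proof of this theorem---it defers entirely to the full version \cite{Jeon:09}---so the only basis for comparison is the surrounding machinery (the cut-set bound of Theorem~\ref{THM:cut_set}, the deterministic pairing of Fig.~\ref{FIG:deterministic_mapping_2hop}, and the block-Markov/type-counting proof of Theorem~\ref{THM:achievable_rate_multi}). Your plan follows exactly that methodology: cut-set converse, pairing of first-hop instances with ``inverting'' second-hop instances under block-Markov relaying, and a law-of-large-numbers argument that slots of each type occur often enough. Your preliminary observation that each hypothesis forces $\mathbb{E}(\rank(\mathbf{H}_1))=\mathbb{E}(\rank(\mathbf{H}_2))$ is correct (the symmetric case makes the hops i.i.d.; the $Z$ and $\Pi$ conditions make $\mathbf{H}_2$ distributed as a row/column permutation or transpose of $\mathbf{H}_1$, which preserves rank). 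However, what you have written is a plan, not a proof, and the part you defer is the entire content of the theorem. The paper's general result, Theorem~\ref{THM:achievable_rate_multi}, already shows that the achievable rate is governed by $\sum_{\mathbf{G}}\rank(\mathbf{G})\min\{\Pr(\mathbf{G}),\Pr(\mathbf{G}')\}$ over matched instance pairs $(\mathbf{G},\mathbf{G}')$; the only thing that distinguishes these three channel families is that their entry-probability equalities force $\Pr_1(\mathbf{G})=\Pr_2(\mathbf{G}')$ \emph{type by type}, including for the rank-one types, so that no $\min\{\cdot,\cdot\}$ loss occurs. You correctly name this as the decisive step (``per-type symmetry''), but you never exhibit the pairing map for the $Z$ and $\Pi$ cases or verify the probability matching, and you suggest---incorrectly---that the aggregate balance $\mathbb{E}(\rank(\mathbf{H}_1))=\mathbb{E}(\rank(\mathbf{H}_2))$ is what drives feasibility. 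Aggregate balance is necessary but far from sufficient: the one-shot cascade example $\frac{177}{256}<\frac{21}{16}$ in the paper is itself a case where both hops are balanced yet a naive scheme falls short, so the case-by-case verification cannot be waved through.

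There is also a concrete error in your pairing rule for rank-$2$ instances: you admit second-hop instances equal to $\mathbf{H}_1[t_1]^{-1}$ ``or its product with the swap, delivering both bits cleanly.'' If the composite map is the swap permutation, destination $1$ receives $s_2$ and destination $2$ receives $s_1$; since these are unicast sessions (source $k$ is paired with destination $k$, and relays cannot exchange buffered contents), a swapped delivery is pure interference and contributes zero rate. The admissible composites are the identity, or partial identities on a subset of pairs---not arbitrary permutations, contrary to your stated design goal that the composite be ``a (partial) permutation.'' This matters quantitatively: counting swap-pairings as feasible inflates the supply side of your Hall-type condition, and under the symmetric hypothesis $\Pr(\mathbf{P}\mathbf{H}^{-1})\neq\Pr(\mathbf{H}^{-1})$ in general, so the error does not cancel. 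To turn your sketch into a proof you would need to (i) restrict the pairing to identity-composites, (ii) write out, for each of the three families, the bijection between first-hop types and the second-hop types that invert them (including the rank-one types, where which single source is activated must also be matched), and (iii) check from the stated equalities among the $p_{j,i,m}$ that each matched pair has equal probability, after which the counting argument of Theorem~\ref{THM:achievable_rate_multi} closes the gap to the bound (\ref{EQ:converse_multi}).
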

\begin{proof}
We refer readers to the full paper \cite{Jeon:09}.
\end{proof}

\subsection{General Multihop Networks}
In this subsection, we propose a transmission scheme for general linear finite-field relay networks when $M\geq 2$ and $p_{j,i,m}=p$ for all $i$, $j$, and $m$.
We assume symmetric rates for all S-D pairs, that is $R_1=\cdots=R_K$, and consider the following class of networks.
\begin{definition}
A linear finite-field relay network is said to have a minimum-dimensional bottleneck-hop $m_0$ if $K_m\geq K_{m_0}$ and $K_{m+1}\geq K_{m_0+1}$ (or $K_m\geq K_{m_0+1}$ and $K_{m+1}\geq K_{m_0}$) for all $m\in\{1,\cdots,M\}$.
\end{definition}

Notice that any networks having $K_m=K$ for all $m$ or any $2$-hop networks are included in this class of networks regardless of the value of $p$.

If a series of pairing from $\mathbf{H}_1$ to $\mathbf{H}_M$ satisfies the condition $\mathbf{H}_1\mathbf{H}_2\cdots\mathbf{H}_M=\mathbf{I}$, each destination can receive information bits without interference.
But if some instances are rank-deficient, we cannot construct such pairs by using rank-deficient instances.
Furthermore, the number of possible pairs increases exponentially as the number of nodes in a layer or the number of layers increases.
Instead, we randomize a series of pairing such that $\mathbf{H}_m$ is paired at random with one instance in a subset of $\mathbf{H}_{m+1}$'s.

\subsubsection{Block Markov encoding and relaying}
The proposed scheme divides a block into $B+M-1$ sub-blocks having length $n_B$ for each sub-block, where $n_B=\frac{n}{B+M-1}$.
Since block Markov encoding and relaying are applied over $M$ hops, the number of effective sub-blocks is equal to $B$.
Thus, the overall rate is given by $\frac{B}{B+M-1}R$, where $R$ is the symmetric rate of each sub-block.
As $n\to\infty$, the fractional rate loss $1-\frac{B}{B+M-1}$ will be negligible because we can make both $n_B$ and $B$ arbitrarily large.
For simplicity, we omit the block index in describing the proposed scheme.

\subsubsection{Balancing the average rank of each hop}
Recall that the $m_0$-th hop becomes a bottleneck for the entire multihop transmission, which can be seen from the sum-rate upper bound in (\ref{EQ:converse_multi}).
As an example, consider $3$-$2$-$2$-$3$ networks in which the second hop becomes a bottleneck.
If each source in $\mathcal{V}_1$ transmits at a rate of $\frac{1}{K}\mathbb{E}(\rank(\mathbf{H}_1))$, then it will cause an error at the second hop.
To prevent this error event, the rate of each source should be decreased to $\frac{1}{K}\mathbb{E}(\rank(\mathbf{H}_{m_0}))$.
For this reason, we select $\mathcal{V}_{m,\operatorname{tx}}[t]\subseteq\mathcal{V}_m$ and $\mathcal{V}_{m,\operatorname{rx}}[t]\subseteq\mathcal{V}_{m+1}$ randomly such that
\begin{equation}
(\mathcal{V}_{m,\operatorname{tx}}[t],\mathcal{V}_{m,\operatorname{rx}}[t])\in\mathcal{V}(K_{m_0},K_{m_0+1},\mathcal{V}_m,\mathcal{V}_{m+1})
\end{equation}
with equal probabilities (or in $\mathcal{V}(K_{m_0+1},K_{m_0},\mathcal{V}_m,\mathcal{V}_{m+1})$).
For each time $t$, only the nodes in $\mathcal{V}_{m,\operatorname{tx}}[t]$ and $\mathcal{V}_{m,\operatorname{rx}}[t]$ will become active at the $m$-th hop.
Notice that since the considered network has a minimum-dimensional bottleneck-hop, it is possible to construct such $\mathcal{V}_{m,\operatorname{tx}}[t]$ and $\mathcal{V}_{m,\operatorname{rx}}[t]$.
In the case of $3$-$2$-$2$-$3$ networks, only the nodes in $\mathcal{V}_{1,\operatorname{tx}}[t]$ and $\mathcal{V}_{1,\operatorname{rx}}[t]$ satisfying $(|\mathcal{V}_{1,\operatorname{tx}}[t]|,|\mathcal{V}_{1,\operatorname{rx}}[t]|)=(2,2)$ become active at the first hop.
The same is true for the last hop.
Whereas the whole nodes in $\mathcal{V}_2$ and $\mathcal{V}_3$ become active at the second hop, that is $\mathcal{V}_{2,\operatorname{tx}}[t]=\mathcal{V}_2$ and $\mathcal{V}_{2,\operatorname{rx}}[t]=\mathcal{V}_3$.

The following lemma shows the probability distribution of $\mathbf{H}_{\mathcal{V}_{m,\operatorname{tx}}[t],\mathcal{V}_{m,\operatorname{rx}}[t]}[t]$, which will be used to derive the achievable rate region of general multihop networks.
\begin{lemma} \label{LEM:mapping1}
Suppose a linear finite-filed relay with $p_{j,i,m}=p$ for all $i$, $j$, and $m$.
If the network has a minimum-dimensional bottleneck-hop, then
\begin{equation}
\Pr(\mathbf{H}_{\mathcal{V}_{m,\operatorname{tx}}[t],\mathcal{V}_{m,\operatorname{rx}}[t]}[t]=\mathbf{H})=p^{u}(1-p)^{K_{m_0+1}K_{m_0}-u},
\label{EQ:p_h}
\end{equation}
where $u$ is the number of zeros in $\mathbf{H}\in\mathbb{F}^{K_{m_0+1}\times K_{m_0}}_2$ or in $\mathbf{H}\in \mathbb{F}^{K_{m_0}\times K_{m_0+1}}_2$.
\end{lemma}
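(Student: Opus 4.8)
The plan is to show that the randomly activated submatrix is, marginally, distributed exactly as a fresh matrix of i.i.d. Bernoulli($p$) entries of the bottleneck dimensions; the product form in (\ref{EQ:p_h}) then follows at once. The two structural facts I would invoke are: (i) the entries of $\mathbf{H}_m[t]$ are mutually independent with $\Pr(h_{j,i,m}[t]=1)=p$, and (ii) the index pair $(\mathcal{V}_{m,\operatorname{tx}}[t],\mathcal{V}_{m,\operatorname{rx}}[t])$ is drawn uniformly over $\mathcal{V}(K_{m_0},K_{m_0+1},\mathcal{V}_m,\mathcal{V}_{m+1})$ (or its transposed counterpart $\mathcal{V}(K_{m_0+1},K_{m_0},\mathcal{V}_m,\mathcal{V}_{m+1})$) by randomness shared a priori among the nodes and independent of the channel realization. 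The minimum-dimensional bottleneck-hop assumption enters only to guarantee that this selection set is nonempty for every $m$, since it supplies $K_m\geq K_{m_0}$ and $K_{m+1}\geq K_{m_0+1}$ (or the transposed inequalities), so that the activation is always well defined.

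First I would condition on a fixed realization $(\bar{\mathcal{V}}',\bar{\mathcal{V}}'')$ of the selection. For such a fixed pair, $\mathbf{H}_{\bar{\mathcal{V}}',\bar{\mathcal{V}}''}[t]$ is a fixed collection of $K_{m_0+1}K_{m_0}$ distinct entries of $\mathbf{H}_m[t]$, which by (i) are mutually independent Bernoulli($p$). Hence its conditional probability mass function factorizes into the product of the corresponding per-entry probabilities, which is precisely the expression on the right-hand side of (\ref{EQ:p_h}). The essential point is that this value depends only on the target matrix $\mathbf{H}$ and not on which pair $(\bar{\mathcal{V}}',\bar{\mathcal{V}}'')$ was chosen, because every fixed-size submatrix of an i.i.d. matrix carries the same marginal law.

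Then I would remove the conditioning by the law of total probability, writing
\begin{equation}
\Pr(\mathbf{H}_{\mathcal{V}_{m,\operatorname{tx}}[t],\mathcal{V}_{m,\operatorname{rx}}[t]}[t]=\mathbf{H})=\sum_{(\bar{\mathcal{V}}',\bar{\mathcal{V}}'')}\Pr(\text{select }(\bar{\mathcal{V}}',\bar{\mathcal{V}}''))\,\Pr(\mathbf{H}_{\bar{\mathcal{V}}',\bar{\mathcal{V}}''}[t]=\mathbf{H}).
\end{equation}
By the previous step the second factor is the same constant for every summand, and the selection probabilities sum to one, so the whole sum collapses to that constant, yielding (\ref{EQ:p_h}). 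The two sub-cases of the bottleneck definition are handled identically, merely exchanging the roles of rows and columns, so it suffices to treat one orientation.

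I expect the only delicate point to be the independence invoked in (ii): one must ensure the activation of $\mathcal{V}_{m,\operatorname{tx}}[t]$ and $\mathcal{V}_{m,\operatorname{rx}}[t]$ is driven by randomness fixed independently of the instantaneous channel, for otherwise conditioning on the selection could bias the retained entries and the factorization would fail. Once this independence is granted, the statement reduces to the elementary fact that a uniformly chosen fixed-size submatrix of an i.i.d. Bernoulli matrix is again i.i.d. Bernoulli with the same parameter, so no further computation is needed.
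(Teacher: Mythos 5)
Your proof is correct, and it is the natural (essentially the only) argument: condition on the realized transmit/receive subsets, observe that any fixed submatrix of an i.i.d.\ Bernoulli($p$) channel matrix is itself i.i.d.\ Bernoulli($p$) with a conditional pmf that does not depend on which subsets were chosen, then average out the channel-independent selection so the sum collapses. The paper itself prints no proof of this lemma (it defers to the full version \cite{Jeon:09}), so there is nothing to compare against beyond noting that you correctly isolate the two hypotheses that actually matter: (a) the subset selection is uniform over the admissible pairs and, crucially, independent of the instantaneous channel realization (unlike the later selection of $\bar{\mathcal{V}}_{m,\operatorname{tx}}[t],\bar{\mathcal{V}}_{m,\operatorname{rx}}[t]$, which does depend on the realized rank), and (b) the minimum-dimensional bottleneck-hop condition, whose only role is to make the selection set $\mathcal{V}(K_{m_0},K_{m_0+1},\mathcal{V}_m,\mathcal{V}_{m+1})$ (or its transpose) nonempty for every hop. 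One small point you gloss over: since $\Pr(h_{j,i,m}[t]=1)=p$, the product of per-entry probabilities for a matrix with $u$ zeros is $(1-p)^{u}p^{K_{m_0+1}K_{m_0}-u}$, which is not literally the right-hand side of (\ref{EQ:p_h}); the lemma's statement evidently contains a typo (either $u$ should count ones or the two factors should be swapped), and your derivation produces the correct form rather than the displayed one, so you should not claim they coincide verbatim.
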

\begin{proof}
We refer readers to the full paper \cite{Jeon:09}.
\end{proof}

The probability distribution of $\mathbf{H}_{\mathcal{V}_{m,\operatorname{tx}}[t],\mathcal{V}_{m,\operatorname{rx}}[t]}[t]$ is the same as that of $\mathbf{H}_{m_0}[t]$, which is the channel matrix of the bottleneck-hop.
Thus if only the nodes in $\mathcal{V}_{m,\operatorname{tx}}[t]$ and $\mathcal{V}_{m,\operatorname{rx}}[t]$ are activated at the $m$-th hop, each hop can deliver information bits that are sustainable at the bottleneck-hop.

\subsubsection{Construction of transmit and receive node sets}
Because the maximum number of bits transmitted at the $m$-th hop is determined by  $\rank(\mathbf{H}_{\mathcal{V}_{m,\operatorname{tx}}[t],\mathcal{V}_{m,\operatorname{rx}}[t]}[t])$, we further select $\bar{\mathcal{V}}_{m,\operatorname{tx}}[t]\subseteq\mathcal{V}_{m,\operatorname{tx}}[t]$ and $\bar{\mathcal{V}}_{m,\operatorname{rx}}[t]\subseteq\mathcal{V}_{m,\operatorname{rx}}[t]$ randomly such that
\begin{equation}
(\bar{\mathcal{V}}_{m,\operatorname{tx}}[t],\bar{\mathcal{V}}_{m,\operatorname{rx}}[t])\in\mathcal{V}(\mathbf{H}_{\mathcal{V}_{m,\operatorname{tx}}[t],\mathcal{V}_{m,\operatorname{rx}}[t]}[t])
\end{equation}
with equal probabilities.
For each time $t$, the nodes in $\bar{\mathcal{V}}_{m,\operatorname{tx}}[t]$ transmit and the nodes in $\bar{\mathcal{V}}_{m,\operatorname{rx}}[t]$ receive through the channel $\mathbf{H}_{\bar{\mathcal{V}}_{m,\operatorname{tx}}[t],\bar{\mathcal{V}}_{m,\operatorname{rx}}[t]}[t]$ at the $m$-th hop.
Then, as we will show later, information bits can be transmitted using particular time indices $t_1,\cdots,t_M$ such that
\begin{equation}
\mathbf{H}_{\bar{\mathcal{V}}_{1,\operatorname{tx}}[t_1],\bar{\mathcal{V}}_{1,\operatorname{rx}}[t_1]}[t_1]\cdots\mathbf{H}_{\bar{\mathcal{V}}_{M,\operatorname{tx}}[t_M],\bar{\mathcal{V}}_{M,\operatorname{rx}}[t_M]}[t_M]=\mathbf{I},
\end{equation}
which guarantees interference-free reception at the destinations.
One of the simplest way is to set $\mathbf{H}_{\bar{\mathcal{V}}_{1,\operatorname{tx}}[t],\bar{\mathcal{V}}_{1,\operatorname{rx}}[t]}[t]=\cdots=\mathbf{H}_{\bar{\mathcal{V}}_{M-1,\operatorname{tx}}[t],\bar{\mathcal{V}}_{M-1,\operatorname{rx}}[t]}[t]=\mathbf{G}$ and $\mathbf{H}_{\bar{\mathcal{V}}_{M,\operatorname{tx}}[t],\bar{\mathcal{V}}_{M,\operatorname{rx}}[t]}[t]=(\mathbf{G}^{M-1})^{-1}$.
It is possible to construct those pairs because the resulting $\mathbf{H}_{\bar{\mathcal{V}}_{m,\operatorname{tx}}[t],\bar{\mathcal{V}}_{m,\operatorname{rx}}[t]}[t]$ is always invertible\footnote{We ignore the instances having all zeros, which give zero rate.}.
Because $\rank(\mathbf{H}_{\bar{\mathcal{V}}_{m,\operatorname{tx}}[t],\bar{\mathcal{V}}_{m,\operatorname{rx}}[t]}[t])=\rank(\mathbf{H}_{\mathcal{V}_{m,\operatorname{tx}}[t],\mathcal{V}_{m,\operatorname{rx}}[t]}[t])$, there is no rate loss by using $(\bar{\mathcal{V}}_{m,\operatorname{tx}}[t],\bar{\mathcal{V}}_{m,\operatorname{rx}}[t])$ instead of using $(\mathcal{V}_{m,\operatorname{tx}}[t],\mathcal{V}_{m,\operatorname{rx}}[t])$.

The following lemma shows the probability distribution of $\mathbf{H}_{\bar{\mathcal{V}}_{m,\operatorname{tx}}[t],\bar{\mathcal{V}}_{m,\operatorname{rx}}[t]}[t]$, which will be used to derive the achievable rate region of general multihop networks.
\begin{lemma} \label{LEM:mapping2}
Suppose a linear finite-field relay network with $p_{j,i,m}=p$ for all $i$, $j$, and $m$.
If the network has a minimum-dimensional bottleneck-hop, then for $\rank(\mathbf{G})=r\neq 0$, we obtain
\begin{eqnarray}
\!\!\!\!\!\!\!\!\!\!\!\!\!\!\!&&\Pr(\mathbf{H}_{\bar{\mathcal{V}}_{m,\operatorname{tx}}[t],\bar{\mathcal{V}}_{m,\operatorname{rx}}[t]}[t]=\mathbf{G})\nonumber\\
\!\!\!\!\!\!\!\!\!\!\!\!\!\!\!&&=\sum_{\underset{\mathcal{V}(r,r,\mathcal{V}_{m_0},\mathcal{V}_{m_0+1})}{(\mathcal{V}',\mathcal{V}'')\in}}\sum_{\mathbf{H}\in\mathcal{H}^F_{\mathcal{V}_{m_0},\mathcal{V}_{m_0+1}}(\mathbf{G},\mathcal{V}',\mathcal{V}'')}\frac{\Pr(\mathbf{H})}{|\mathcal{V}(\mathbf{H})|},
\label{EQ:p_G}
\end{eqnarray}
where $\Pr(\mathbf{H})$ is given by (\ref{EQ:p_h}).
If $p=1/2$, we obtain
\begin{equation}
\Pr(\mathbf{G})=2^{-K_{m_0+1}K_{m_0}}\frac{N_{K_{m_0+1}, K_{m_0}}(r)}{N_{r,r}(r)},
\label{EQ:p_G_equal_prob}
\end{equation}
where $N_{a,b}(i)$ is the number of instances in $\mathbb{F}_2^{a\times b}$ having rank $i$.
\end{lemma}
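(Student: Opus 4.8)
The plan is to separate the two stages of the random selection and apply the law of total probability, then reduce the $p=1/2$ statement to a single combinatorial identity. Write $\mathbf{H}:=\mathbf{H}_{\mathcal{V}_{m,\operatorname{tx}}[t],\mathcal{V}_{m,\operatorname{rx}}[t]}[t]$ for the matrix produced by the first selection, whose distribution is $\Pr(\mathbf{H})$ of (\ref{EQ:p_h}) by Lemma \ref{LEM:mapping1}; the pair $(\bar{\mathcal{V}}_{m,\operatorname{tx}}[t],\bar{\mathcal{V}}_{m,\operatorname{rx}}[t])$ is then drawn uniformly from $\mathcal{V}(\mathbf{H})$. Fixing a target $\mathbf{G}$ with $\rank(\mathbf{G})=r\neq0$ and conditioning on $\mathbf{H}$, the probability of outputting $\mathbf{G}$ equals the number of positions $(\mathcal{V}',\mathcal{V}'')\in\mathcal{V}(\mathbf{H})$ with $\mathbf{H}_{\mathcal{V}',\mathcal{V}''}=\mathbf{G}$, divided by $|\mathcal{V}(\mathbf{H})|$. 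Since $\mathbf{G}$ is full rank, such a position can occur only when $\rank(\mathbf{H})=r$, so only these $\mathbf{H}$ contribute. Writing the conditional count as $\sum_{(\mathcal{V}',\mathcal{V}'')\in\mathcal{V}(r,r,\mathcal{V}_{m_0},\mathcal{V}_{m_0+1})}\mathbf{1}[\mathbf{H}_{\mathcal{V}',\mathcal{V}''}=\mathbf{G}]$ and exchanging the order of summation over $\mathbf{H}$ and over $(\mathcal{V}',\mathcal{V}'')$, the inner sum runs over exactly those $\mathbf{H}$ with $\rank(\mathbf{H})=r=\rank(\mathbf{G})$ and $\mathbf{H}_{\mathcal{V}',\mathcal{V}''}=\mathbf{G}$, i.e.\ over $\mathcal{H}^F_{\mathcal{V}_{m_0},\mathcal{V}_{m_0+1}}(\mathbf{G},\mathcal{V}',\mathcal{V}'')$. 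This yields (\ref{EQ:p_G}), and is routine bookkeeping.

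For (\ref{EQ:p_G_equal_prob}) I would put $p=1/2$, so that $\Pr(\mathbf{H})=2^{-ab}$ for every $\mathbf{H}$, where $a\times b$ is the size of the bottleneck matrix. Letting $c(\mathbf{H},\mathbf{G})$ denote the number of full-rank positions of $\mathbf{H}$ equal to $\mathbf{G}$, (\ref{EQ:p_G}) reduces to $\Pr(\mathbf{G})=2^{-ab}f(\mathbf{G})$ with $f(\mathbf{G}):=\sum_{\mathbf{H}:\rank(\mathbf{H})=r}c(\mathbf{H},\mathbf{G})/|\mathcal{V}(\mathbf{H})|$, so the claim is equivalent to $f(\mathbf{G})=N_{a,b}(r)/N_{r,r}(r)$ for every full-rank $\mathbf{G}$ -- equivalently, that the selected block is uniform over the $N_{r,r}(r)$ invertible $r\times r$ matrices given $\rank(\mathbf{H})=r$. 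The normalization is immediate: summing $f$ over all full-rank $\mathbf{G}$ and using $\sum_{\mathbf{G}}c(\mathbf{H},\mathbf{G})=|\mathcal{V}(\mathbf{H})|$ cancels the weight and gives $\sum_{\mathbf{G}}f(\mathbf{G})=N_{a,b}(r)$. It therefore remains only to prove that $f(\mathbf{G})$ does not depend on the particular full-rank $\mathbf{G}$.

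This constancy is the main obstacle, and I would attack it through the transitive action of $\mathrm{GL}_r(\mathbb{F}_2)$ on invertible $r\times r$ matrices: it suffices to produce, for each $\mathbf{A}\in\mathrm{GL}_r(\mathbb{F}_2)$, a bijection on rank-$r$ matrices carrying a block $\mathbf{G}$ to $\mathbf{A}\mathbf{G}$ (or $\mathbf{G}\mathbf{A}$) while preserving $|\mathcal{V}(\mathbf{H})|$. The clean case is a saturated dimension: if $r=b$ the column set is forced and the global right multiplication $\mathbf{H}\mapsto\mathbf{H}\mathbf{A}$ transforms every full-rank block simultaneously and manifestly preserves rank and $\mathcal{V}(\mathbf{H})$ (symmetrically $\mathbf{H}\mapsto\mathbf{A}\mathbf{H}$ when $r=a$), settling the statement whenever the block is as wide or as tall as possible. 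The genuine difficulty is the general case $r<\min(a,b)$: realizing $\mathbf{G}\mapsto\mathbf{A}\mathbf{G}$ at a single position forces row operations among that position's rows, which change the rank of other positions whose row set only partially overlaps it, so no one-sided global multiplication is $\mathcal{V}$-preserving. I expect the way out is a position-aware argument: parametrize each marked pair $(\mathbf{H},(\mathcal{V}',\mathcal{V}''))$ by its invertible block and the two off-diagonal blocks -- the remaining entries being forced by $\rank(\mathbf{H})=r$ -- so that the raw number of rank-$r$ extensions containing a prescribed block at a fixed position is visibly block-independent; matching the weights $1/|\mathcal{V}(\mathbf{H})|$ across different blocks is the delicate point I expect to consume most of the effort.
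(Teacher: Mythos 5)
The paper defers this proof to its unpublished full version, so your attempt must be judged on its own merits, and on those merits it has a genuine gap. The first half is fine: your derivation of (\ref{EQ:p_G}) by conditioning on the matrix of Lemma \ref{LEM:mapping1}, noting that only $\rank(\mathbf{H})=r$ contributes, and exchanging the two sums is exactly the required bookkeeping; likewise your reduction of (\ref{EQ:p_G_equal_prob}) to the constancy of $f(\mathbf{G})=\sum_{\mathbf{H}:\rank(\mathbf{H})=r}c(\mathbf{H},\mathbf{G})/|\mathcal{V}(\mathbf{H})|$ over invertible $\mathbf{G}$, via the normalization $\sum_{\mathbf{G}}f(\mathbf{G})=N_{a,b}(r)$ with $a=K_{m_0+1}$, $b=K_{m_0}$, is correct. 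But that constancy \emph{is} the entire mathematical content of the $p=1/2$ claim whenever $r<\min\{a,b\}$, and you leave it unproved; your group-action idea only closes the saturated cases $r=a$ or $r=b$. Worse, the natural implementation of the ``position-aware'' repair you sketch fails: in the parametrization $\mathbf{H}=\bigl(\begin{smallmatrix}\mathbf{G}&\mathbf{B}\\ \mathbf{C}&\mathbf{C}\mathbf{G}^{-1}\mathbf{B}\end{smallmatrix}\bigr)$, replacing $\mathbf{G}$ by $\mathbf{G}'$ while transporting only one off-diagonal block does not preserve $|\mathcal{V}(\cdot)|$. Concretely, for $a=b=3$, $r=2$, marked position rows/columns $\{1,2\}$,
\begin{equation*}
\mathbf{H}=\begin{pmatrix}1&0&0\\0&1&0\\1&0&0\end{pmatrix}\ \text{has}\ |\mathcal{V}(\mathbf{H})|=2,
\qquad
\mathbf{H}'=\begin{pmatrix}1&1&0\\0&1&0\\1&0&0\end{pmatrix}\ \text{has}\ |\mathcal{V}(\mathbf{H}')|=3,
\end{equation*}
although $\mathbf{H}'$ is obtained from $\mathbf{H}$ by swapping the block $\mathbf{I}$ for $\bigl(\begin{smallmatrix}1&1\\0&1\end{smallmatrix}\bigr)$ with $\mathbf{B}$, $\mathbf{C}$ (and the forced corner) unchanged. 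So the weights $1/|\mathcal{V}(\mathbf{H})|$ genuinely redistribute under such a map, and the ``delicate point'' you defer is precisely where the proof lives.

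The gap can be closed with a rank factorization, which decouples the row-side and column-side dependence that defeats one-sided multiplications. Every rank-$r$ matrix $\mathbf{H}\in\mathbb{F}_2^{a\times b}$ has exactly $N_{r,r}(r)$ factorizations $\mathbf{H}=\mathbf{U}\mathbf{W}$ with $\mathbf{U}\in\mathbb{F}_2^{a\times r}$, $\mathbf{W}\in\mathbb{F}_2^{r\times b}$ of rank $r$ (any two full-rank $a\times r$ matrices with the same column space differ by a unique element of $\operatorname{GL}_r(\mathbb{F}_2)$), so drawing $(\mathbf{U},\mathbf{W})$ independently and uniformly induces the uniform distribution on rank-$r$ matrices. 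A block $\mathbf{H}_{S,T}=\mathbf{U}_S\mathbf{W}_T$ is invertible iff both $\mathbf{U}_S$ (rows $S$ of $\mathbf{U}$) and $\mathbf{W}_T$ (columns $T$ of $\mathbf{W}$) are, so $\mathcal{V}(\mathbf{H})=\mathcal{R}(\mathbf{U})\times\mathcal{C}(\mathbf{W})$ with $\mathcal{R}(\mathbf{U})=\{S:\mathbf{U}_S\ \text{invertible}\}$, $\mathcal{C}(\mathbf{W})=\{T:\mathbf{W}_T\ \text{invertible}\}$; hence the uniform position choice splits into independent uniform choices of $S$ and $T$, and the output block is a product of independent factors. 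Now $\mathbf{U}_S$ is uniform on $\operatorname{GL}_r(\mathbb{F}_2)$: the substitution $\mathbf{U}\mapsto\mathbf{U}\mathbf{A}$ is a bijection preserving $\mathcal{R}$ and carrying $\mathbf{U}_S\mapsto\mathbf{U}_S\mathbf{A}$, so $\Pr(\mathbf{U}_S=\mathbf{G})=\Pr(\mathbf{U}_S=\mathbf{G}\mathbf{A}^{-1})$ for every $\mathbf{A}$, forcing uniformity; a uniform group element times an independent one is uniform, so the selected block is uniform on $\operatorname{GL}_r(\mathbb{F}_2)$ given $\rank(\mathbf{H})=r$. Multiplying by $\Pr(\rank(\mathbf{H})=r)=2^{-ab}N_{a,b}(r)$ (Lemma \ref{LEM:mapping1} with $p=1/2$) gives (\ref{EQ:p_G_equal_prob}). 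This also reveals the correct form of your bijection: it must twist \emph{both} off-diagonal blocks, $(\mathbf{B},\mathbf{C})\mapsto\bigl(\mathbf{G}'\mathbf{G}^{-1}\mathbf{B},\ \mathbf{C}\mathbf{G}^{-1}\mathbf{G}'\bigr)$ with the corner recomputed as $\mathbf{C}'(\mathbf{G}')^{-1}\mathbf{B}'$, which preserves $\mathcal{V}(\mathbf{H})$ as a set and settles the constancy you needed.
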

\begin{proof}
We refer readers to the full paper \cite{Jeon:09}.
\end{proof}

\subsubsection{Encoding, relaying, and decoding functions}
Define $\mathcal{T}_m(\mathbf{G}, \mathcal{V}'_m,\mathcal{V}'_{m+1})$ as the set of time indices of the sub-block at the $m$-th hop satisfying $\bar{\mathcal{V}}_{\operatorname{tx},m}[t]=\mathcal{V}'_m$, $\bar{\mathcal{V}}_{\operatorname{rx},m}[t]=\mathcal{V}'_{m+1}$, and $\mathbf{H}_{\mathcal{V}'_m,\mathcal{V}'_{m+1}}[t]=\mathbf{G}$, where $m\in\{1,\cdots,M\}$.
We further define
\begin{equation}
n(\mathbf{G})=n_BR\min\left\{\Pr(\mathbf{G}),\Pr\left((\mathbf{G}^{M-1})^{-1}\right)\right\}c^{-1},
\label{EQ:n_G}
\end{equation}
where
\begin{equation}
c=\frac{1}{K}\sum_{i=1}^{K_{\operatorname{min}}}i\!\!\!\!\sum_{\underset{\rank(\mathbf{G})= i}{\mathbf{G}\in\mathbb{F}^{i\times i}_2,}}\!\!\!\min\{\Pr(\mathbf{G}),\Pr((\mathbf{G}^{M-1})^{-1})\}.
\label{EQ:c}
\end{equation}

Each source will transmit $\frac{1}{K}\sum_{\mathbf{G}}\rank(\mathbf{G})n(\mathbf{G})$ bits during $n_B$ channel uses.
From (\ref{EQ:n_G}) and (\ref{EQ:c}), we can check that $R$ is equal to $\frac{1}{Kn_B}\sum_{\mathbf{G}}\rank(\mathbf{G})n(\mathbf{G})$.
For all full-rank matrices $\mathbf{G}\in\cup_{i=1}^{K_{\operatorname{min}}} \mathbb{F}_2^{i\times i}$, the detailed encoding and relaying are as follows, where $r=\rank(\mathbf{G})$.
\begin{itemize}
\item (Encoding)

For all $(\mathcal{V}'_1,\mathcal{V}'_2)\in \mathcal{V}(r,r,\mathcal{V}_1,\mathcal{V}_2)$,
declare an error if $|\mathcal{T}_1(\mathbf{G},\mathcal{V}'_1,\mathcal{V}'_2)|<n(\mathbf{G})/\big(\binom{K_1}{r}\binom{K_2}{r}\big)$, otherwise each source in $\mathcal{V}'_1$ transmits $n(\mathbf{G})/\big(\binom{K_1}{r}\binom{K_2}{r}\big)$ information bits using the time indices in $\mathcal{T}_1(\mathbf{G},\mathcal{V}'_1,\mathcal{V}'_2)$ to the nodes in $\mathcal{V}'_2$.
\item (Relaying for $m\in\{2,\cdots,M-1\}$)

For all $(\mathcal{V}'_m,\mathcal{V}'_{m+1})\in \mathcal{V}(r,r,\mathcal{V}_m,\mathcal{V}_{m+1})$,
declare an error if $|\mathcal{T}_m(\mathbf{G},\mathcal{V}'_m,\mathcal{V}'_{m+1})|<n(\mathbf{G})/\big(\binom{K_m}{r}\binom{K_{m+1}}{r}\big)$, otherwise each node in $\mathcal{V}'_m$ relays $n(\mathbf{G})/\big(\binom{K_m}{r}\binom{K_{m+1}}{r}\big)$ bits using the time indices in $\mathcal{T}_m(\mathbf{G},\mathcal{V}'_m,\mathcal{V}'_{m+1})$ to the nodes in $\mathcal{V}'_{m+1}$.
\item (Relaying for $m=M$)

For all $(\mathcal{V}'_M,\mathcal{V}'_{M+1})\in \mathcal{V}(r,r,\mathcal{V}_M,\mathcal{V}_{M+1})$,
declare an error if $|\mathcal{T}_M((\mathbf{G}^{M-1})^{-1},\mathcal{V}'_M,\mathcal{V}'_{M+1})|<n(\mathbf{G})/\big(\binom{K_M}{r}\binom{K_{M+1}}{r}\big)$, otherwise each node in $\mathcal{V}'_M$ relays $n(\mathbf{G})/\big(\binom{K_M}{r}\binom{K_{M+1}}{r}\big)$ bits using the time indices in $\mathcal{T}_M((\mathbf{G}^{M-1})^{-1},\mathcal{V}'_M,\mathcal{V}'_{M+1})$ to the destinations in $\mathcal{V}'_{M+1}$.
\end{itemize}

For simplicity, we assume that $n(\mathbf{G})/\big(\binom{K_m}{r}\binom{K_{m+1}}{r}\big)$ is an integer.
Notice that the decoding error does not occur if there is no encoding and relaying error since each destination can receive $2^{n_BR}$ information bits without interference for this case.
\subsubsection{Relaying of the received bits}
Let us now consider how each relay distributes its received bits to the nodes in the next layer.
For given $\mathbf{G}$ and $\mathcal{V}'_m$, each node in $\mathcal{V}'_m$ receives $n(\mathbf{G})/\binom{K_m}{r}$ bits and then transmits $n(\mathbf{G})/\big(\binom{K_m}{r}\binom{K_{m+1}}{r}\big)$ received bits to the nodes in $\mathcal{V}'_{m+1}$, where $r=\rank(\mathbf{G})$.
Since there exist $\binom{K_{m+1}}{r}$ possible $\mathcal{V}'_{m+1}$'s, the total number of received bits is the same as the total number of transmit bits at each node in $\mathcal{V}'_m$.
For $m\in\{2,\cdots,M-1\}$, we distribute the received bits that arrive from different paths evenly to form $n(\mathbf{G})/\big(\binom{K_m}{r}\binom{K_{m+1}}{r}\big)$ bits.
Then, among $n(\mathbf{G})/\binom{K_m}{r}$ received bits, $n(\mathbf{G})/\big(\binom{K_1}{r}\binom{K_m}{r}\big)$ received bits originate from the sources in $\mathcal{V}'_1$.
Therefore, at the last hop, the nodes in $\mathcal{V}'_{M+1}$, which are the corresponding destinations of the sources in $\mathcal{V}'_1$, can collect all received bits that originate from the sources in $\mathcal{V}'_1$.
This is because, for each node in $\mathcal{V}'_{M}$, the number of the received bits originated from $\mathcal{V}'_1$ is the same as the number of bits able to transmit to $\mathcal{V}'_{M+1}$, which are given by $n(\mathbf{G})/\big(\binom{K_1}{r}\binom{K_{M}}{r}\big)$ and $n(\mathbf{G})/\big(\binom{K_M}{r}\binom{K_{M+1}}{r}\big)$ respectively, where we use the fact that $K=K_1=K_{M+1}$.


\section{Achievable Rate Region}
In this section, we derive an achievable rate region by applying the proposed block Markov encoding and relaying.

Let $E_m$ denote the encoding error at the $m$-th hop.
Then $E_m$ occurs if
\begin{equation}
|\mathcal{T}_m(\mathbf{G},\mathcal{V}'_m,\mathcal{V}'_{m+1})|<\frac{n(\mathbf{G})}{\binom{K_m}{\rank(\mathbf{G})}\binom{K_{m+1}}{\rank(\mathbf{G})}}
\end{equation}
for any $\mathcal{V}'_m$, $\mathcal{V}'_{m+1}$, and $\mathbf{G}$, where $m\in\{1,\cdots,M-1\}$.
Similarly, $E_M$ occurs if
\begin{equation}
|\mathcal{T}_M(\mathbf{G},\mathcal{V}'_M,\mathcal{V}'_{M+1})|<\frac{n\left((\mathbf{G}^{M-1})^{-1}\right)}{\binom{K_M}{\rank(\mathbf{G})}\binom{K_{M+1}}{\rank(\mathbf{G})}}
\end{equation}
for any $\mathcal{V}'_M$, $\mathcal{V}'_{M+1}$, and $\mathbf{G}$.
Since decoding error does not occur if there is no encoding and relaying error, from the union bound, we obtain $P^{(n_B)}_{e,k}\leq\sum_{m=1}^M\Pr(E_m)$.

\begin{theorem} \label{THM:achievable_rate_multi}
Suppose a linear finite-field relay network with $M\geq2$ and $p_{j,i,m}=p$ for all $i$, $j$, and $m$.
If the network has a minimum-dimensional bottleneck-hop, then for any $\delta>0$,
\begin{equation}
\!R_k=\frac{1}{K}\sum_{i=1}^{K_{\operatorname{min}}}i\!\!\!\!\sum_{\underset{\rank(\mathbf{G})= i}{\mathbf{G}\in\mathbb{F}^{i\times i}_2,}}\!\!\!\min\{\Pr(\mathbf{G}),\Pr((\mathbf{G}^{M-1})^{-1})\}-\delta
\label{EQ:sum_rate}
\end{equation}
is achievable for all $k\in\{1,\cdots,K\}$, where $\Pr(\mathbf{G})$ is given by (\ref{EQ:p_G}).
\end{theorem}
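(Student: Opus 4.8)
The plan is to run the proposed block Markov scheme at an arbitrary per-sub-block symmetric rate $R<c$, where $c$ is the constant defined in (\ref{EQ:c}), and to show its error probability vanishes; since the right-hand side of (\ref{EQ:sum_rate}) equals $c-\delta$ and the block Markov overhead costs only a factor $\tfrac{B}{B+M-1}\to1$, letting $R\uparrow c$ and $B\to\infty$ then yields the claim. First I would confirm the rate bookkeeping: substituting (\ref{EQ:n_G}) into $\tfrac{1}{Kn_B}\sum_{\mathbf{G}}\rank(\mathbf{G})n(\mathbf{G})$, the $n_B$ cancels and there remains $\tfrac{R}{Kc}\sum_{\mathbf{G}}\rank(\mathbf{G})\min\{\Pr(\mathbf{G}),\Pr((\mathbf{G}^{M-1})^{-1})\}$, which equals $R$ because the sum is $Kc$ by (\ref{EQ:c}). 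Thus each source does deliver $R$ bits per sub-block and $R$ is a free design parameter constrained only by $R<c$.

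Because a decoding error occurs only when some encoding or relaying error $E_m$ occurs---each surviving hop forwards exactly the bits it received, so the realized full-rank matrices multiply to $\mathbf{I}$ and every destination sees its bits interference-free---the union bound gives $P^{(n_B)}_{e,k}\le\sum_{m=1}^M\Pr(E_m)$, and it suffices to bound one $\Pr(E_m)$. Fix $m\in\{1,\dots,M-1\}$, a full-rank $\mathbf{G}$ of rank $r$, and a pair $(\mathcal{V}'_m,\mathcal{V}'_{m+1})$. I would write $|\mathcal{T}_m(\mathbf{G},\mathcal{V}'_m,\mathcal{V}'_{m+1})|$ as a sum of $n_B$ indicators $\mathbf{1}[t\in\mathcal{T}_m(\mathbf{G},\mathcal{V}'_m,\mathcal{V}'_{m+1})]$; these are i.i.d. across $t$ because the channels $\mathbf{H}_m[t]$ and the random set selections are drawn independently over time.

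The per-index success probability $q$ is the heart of the computation. By Lemma \ref{LEM:mapping2} the realized submatrix $\mathbf{H}_{\bar{\mathcal{V}}_{m,\operatorname{tx}}[t],\bar{\mathcal{V}}_{m,\operatorname{rx}}[t]}[t]$ equals $\mathbf{G}$ with probability $\Pr(\mathbf{G})$ of (\ref{EQ:p_G}); and by the node-label permutation symmetry of the scheme---valid because $p_{j,i,m}=p$ makes every channel entry exchangeable and all selections are label-symmetric---conditioned on this submatrix being $\mathbf{G}$ the chosen pair $(\bar{\mathcal{V}}_{m,\operatorname{tx}}[t],\bar{\mathcal{V}}_{m,\operatorname{rx}}[t])$ is uniform over all $\binom{K_m}{r}\binom{K_{m+1}}{r}$ size-$r$ subset pairs of the two layers. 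Hence $q=\Pr(\mathbf{G})/\big(\binom{K_m}{r}\binom{K_{m+1}}{r}\big)$ and $\E\,|\mathcal{T}_m(\mathbf{G},\mathcal{V}'_m,\mathcal{V}'_{m+1})|=n_Bq$. Using (\ref{EQ:n_G}) and $\min\{\Pr(\mathbf{G}),\Pr((\mathbf{G}^{M-1})^{-1})\}\le\Pr(\mathbf{G})$, the error threshold satisfies
\[
\frac{n(\mathbf{G})}{\binom{K_m}{r}\binom{K_{m+1}}{r}}\le\frac{R}{c}\,\E\,|\mathcal{T}_m(\mathbf{G},\mathcal{V}'_m,\mathcal{V}'_{m+1})|,
\]
which, as $R/c<1$ is a fixed fraction, makes $E_m$ a lower-tail large-deviation event; Chernoff's bound then gives an exponentially small failure probability in $n_B$. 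The last hop $m=M$ is identical once the target matrix is taken to be $(\mathbf{G}^{M-1})^{-1}$, whose rank is again $r$ and whose per-index probability is $\Pr((\mathbf{G}^{M-1})^{-1})/\big(\binom{K_M}{r}\binom{K_{M+1}}{r}\big)$; the $\min$ in (\ref{EQ:n_G}) guarantees the same factor $R/c<1$ between threshold and mean.

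Finally, since $M$ is finite, the full-rank matrices $\mathbf{G}\in\cup_{i=1}^{K_{\operatorname{min}}}\mathbb{F}_2^{i\times i}$ are finite in number, and so are the size-$r$ subset pairs, a union bound over all tuples $(m,\mathbf{G},\mathcal{V}'_m,\mathcal{V}'_{m+1})$ keeps $\sum_m\Pr(E_m)$ exponentially small, so $P^{(n_B)}_{e,k}\to0$ for every $R<c$. The step I expect to be the main obstacle is the conditional-uniformity claim behind $q$: one must verify that the two-stage random selection (first the active sets of sizes $K_{m_0}$ and $K_{m_0+1}$, then the full-rank size-$r$ sub-selection, which itself depends on the channel) combined with the symmetric channel law really does make the realized index pair uniform over all $\binom{K_m}{r}\binom{K_{m+1}}{r}$ pairs of the full layers, and not merely over pairs lying inside the active sets. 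Once that symmetry is in hand, the remaining concentration and union-bound steps are routine.
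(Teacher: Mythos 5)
Your proposal is correct and takes essentially the same route as the paper's proof: both reduce the error analysis to concentration of $|\mathcal{T}_m(\mathbf{G},\mathcal{V}'_m,\mathcal{V}'_{m+1})|$ around its mean $n_B\Pr(\mathbf{G})/\bigl(\binom{K_m}{r}\binom{K_{m+1}}{r}\bigr)$, invoke the $\min$ in (\ref{EQ:n_G}) to cover the last hop with $(\mathbf{G}^{M-1})^{-1}$, and finish with a union bound over the finitely many tuples $(m,\mathbf{G},\mathcal{V}'_m,\mathcal{V}'_{m+1})$. The only differences are cosmetic---the paper uses the weak law of large numbers and sets $R=c-\delta^*_{n_B}$ with an explicit vanishing sequence, whereas you fix $R<c$ and use a Chernoff bound---and the conditional-uniformity fact $\Pr(\mathbf{G},\mathcal{V}'_m,\mathcal{V}'_{m+1})=\Pr(\mathbf{G})/\bigl(\binom{K_m}{r}\binom{K_{m+1}}{r}\bigr)$ that you rightly flag as the delicate step is likewise asserted without proof in the paper.
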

\begin{proof}
Let us first consider $|\mathcal{T}_m(\mathbf{G},\mathcal{V}'_m,\mathcal{V}'_{m+1})|$, where $r=\rank(\mathbf{G})$.
By the weak law of large numbers \cite{Csiszar:81}, there exists a sequence $\epsilon_{n_B}\to 0$ as $n_B\to\infty$ such that the probability
\begin{equation}
|\mathcal{T}_m(\mathbf{G},\mathcal{V}'_m,\mathcal{V}'_{m+1})|\geq n_B(\Pr(\mathbf{G},\mathcal{V}'_m,\mathcal{V}'_{m+1})+\delta_{n_B})
\end{equation}
for all $\mathbf{G}$, $\mathcal{V}'_m$, and $\mathcal{V}'_{m+1}$ is greater than or equal to $1-\epsilon_{n_B}$, where $\delta_{n_B}\to 0$ as $n_B\to\infty$.
This indicates that $\Pr(E_m)\leq \epsilon_{n_B}$ if
\begin{equation}
n(\mathbf{G})\leq n_B\left(\Pr(\mathbf{G})-\binom{K_m}{r}\binom{K_{m+1}}{r}\delta_{n_B}\right)
\label{EQ:condition_m}
\end{equation}
for all $\mathbf{G}$, where $m\in\{1,\cdots,M-1\}$.
Note that we use the fact that $\Pr(\mathbf{G},\mathcal{V}'_m,\mathcal{V}'_{m+1})=\Pr(\mathbf{G})/\big(\binom{K_m}{r}\binom{K_{m+1}}{r}\big)$.
Similarly, $\Pr(E_M)\leq \epsilon_{n_B}$ if
\begin{equation}
n(\mathbf{G})\leq n_B\left(\Pr\left((\mathbf{G}^{M-1})^{-1}\right)-\binom{K_M}{r}\binom{K_{M+1}}{r}\delta_{n_B}\right)
\label{EQ:condition_M}
\end{equation}
for all $\mathbf{G}$.
Then, $P^{(n_B)}_{e,k}\leq M\epsilon_n$ if (\ref{EQ:condition_m}) and (\ref{EQ:condition_M}) hold for all $\mathbf{G}$.
This condition is satisfied if
\begin{equation}
R\leq c-\frac{c(K_{\max}!)^2\delta_{n_B}}{\min\{\Pr(\mathbf{G}),\Pr((\mathbf{G}^{M-1})^{-1})\}}
\end{equation}
for all $\mathbf{G}$, where we use the definition of $n(\mathbf{G})$ in (\ref{EQ:n_G}) and the fact that $\binom{K_m}{r}\leq K_{\max}!$.
Thus we set $R=c-\delta^*_{n_B}$, where $\delta^*_{n_B}=\frac{c(K_{\max}!)^2\delta_{n_B}}{\min_{\mathbf{G}}\{\Pr(\mathbf{G})\}}$, which tends to zero as $n_B\to\infty$.
In conclusion, we obtain
\begin{equation}
R=\frac{1}{K}\sum_{i=1}^{K_{\operatorname{min}}}i\!\!\!\!\!\sum_{\underset{\rank(\mathbf{G})= i}{\mathbf{G}\in\mathbb{F}^{i\times i}_2,}}\!\!\!\min\{\Pr(\mathbf{G}),\Pr((\mathbf{G}^{M-1})^{-1})\}-\delta^*_{n_B}
\label{EQ:achievable_rate_multi}
\end{equation}
is achievable.
Since $R$ is the symmetric rate and $\delta^*_{n_B}\to0$ as $n_B\to\infty$, this proves the assertion.
\end{proof}


Now let us consider the capacity achieving case.
If $\Pr(\mathbf{G})=\Pr((\mathbf{G}^{M-1})^{-1})$ for all possible $\mathbf{G}$, then the achievable sum-rate in Theorem \ref{THM:achievable_rate_multi} will coincide with the upper bound in (\ref{EQ:converse_multi}).
When the channel instances are uniformly distributed, the above condition holds and, as a result, the sum capacity is characterized.
The following corollary shows that the sum capacity is given by the average rank of the channel matrix of the bottleneck-hop when $p=1/2$.

\begin{corollary} \label{CO:capacity_multi}
Suppose a linear finite-field relay network with $M\geq2$ and $p_{j,i,m}=1/2$ for all $i$, $j$, and $m$.
If the network has a minimum-dimensional bottleneck-hop, the sum capacity is given by
\begin{equation}
C_{\operatorname{sum}}=2^{-K_{m_0+1}K_{m_0}}\!\!\!\!\!\sum_{\mathbf{H}\in\mathbb{F}_2^{K_{m_0+1}\times K_{m_0}}}\rank(\mathbf{H}).
\label{EQ:C_sum2}
\end{equation}
\end{corollary}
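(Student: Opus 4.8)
The plan is to match the achievable symmetric rate of Theorem \ref{THM:achievable_rate_multi} against the cut-set bound of Theorem \ref{THM:cut_set} when $p=1/2$, so that the two meet at (\ref{EQ:C_sum2}). For the converse I would invoke Theorem \ref{THM:cut_set} to obtain $R_{\operatorname{sum}}\leq\mathbb{E}(\rank(\mathbf{H}_{m_0}))$. Since $p=1/2$, each entry of $\mathbf{H}_{m_0}$ is an independent uniform bit, so $\mathbf{H}_{m_0}$ is uniformly distributed over $\mathbb{F}_2^{K_{m_0+1}\times K_{m_0}}$ and its expected rank is precisely the right-hand side of (\ref{EQ:C_sum2}).

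For achievability, the key observation is that at $p=1/2$ the probability in (\ref{EQ:p_G_equal_prob}) depends on $\mathbf{G}$ only through $r=\rank(\mathbf{G})$. I would then note that for any invertible $\mathbf{G}\in\mathbb{F}_2^{i\times i}$ the matrix $(\mathbf{G}^{M-1})^{-1}$ is again an invertible $i\times i$ matrix, hence of the same rank $i$; therefore $\Pr(\mathbf{G})=\Pr((\mathbf{G}^{M-1})^{-1})$ and the $\min$ in (\ref{EQ:sum_rate}) collapses to $\Pr(\mathbf{G})$. Summing $\Pr(\mathbf{G})$ over all invertible $\mathbf{G}\in\mathbb{F}_2^{i\times i}$ multiplies (\ref{EQ:p_G_equal_prob}) by the number of such matrices, namely $N_{i,i}(i)$, which cancels the denominator and leaves $2^{-K_{m_0+1}K_{m_0}}N_{K_{m_0+1},K_{m_0}}(i)$. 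Forming the sum-rate $K R_k$ and summing over $i$ then gives $2^{-K_{m_0+1}K_{m_0}}\sum_{i}i\,N_{K_{m_0+1},K_{m_0}}(i)-K\delta$. Since the bottleneck-hop is minimum-dimensional, the range $1\leq i\leq K_{\operatorname{min}}$ already covers every attainable rank of $\mathbf{H}_{m_0}$, so appending the vanishing $i=0$ term changes nothing; grouping matrices by rank then identifies $\sum_{i}i\,N_{K_{m_0+1},K_{m_0}}(i)$ with $\sum_{\mathbf{H}}\rank(\mathbf{H})$ taken over $\mathbb{F}_2^{K_{m_0+1}\times K_{m_0}}$. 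Letting $\delta\to0$, the achievable sum-rate coincides with the converse, establishing (\ref{EQ:C_sum2}).

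The main obstacle is conceptual rather than computational: recognizing that the rank-only dependence of (\ref{EQ:p_G_equal_prob}) forces $\Pr(\mathbf{G})=\Pr((\mathbf{G}^{M-1})^{-1})$ and thereby removes the $\min$ that otherwise separates the achievable region from the bound. Once that symmetry is in hand, the cancellation of the $N_{i,i}(i)$ factors and the rank-stratification identity for $N_{K_{m_0+1},K_{m_0}}(i)$ are routine bookkeeping.
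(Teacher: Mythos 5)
Your proposal is correct and follows essentially the same route as the paper's proof: use the rank-only dependence of (\ref{EQ:p_G_equal_prob}) at $p=1/2$ to conclude $\Pr(\mathbf{G})=\Pr((\mathbf{G}^{M-1})^{-1})$ and collapse the $\min$ in Theorem \ref{THM:achievable_rate_multi}, sum over full-rank $\mathbf{G}$ to cancel $N_{i,i}(i)$, identify the result with $2^{-K_{m_0+1}K_{m_0}}\sum_{\mathbf{H}}\rank(\mathbf{H})$ using $K_{\min}=\min\{K_{m_0},K_{m_0+1}\}$, and match it to the cut-set bound of Theorem \ref{THM:cut_set}. Your version is slightly more explicit than the paper's only in spelling out why the cut-set bound itself equals the right-hand side of (\ref{EQ:C_sum2}) (uniformity of $\mathbf{H}_{m_0}$), a step the paper leaves implicit.
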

\begin{proof}
Consider the case $p_{j,i,m}=1/2$.
Then, from (\ref{EQ:p_G_equal_prob}), $\Pr(\mathbf{G})$ is a function of $\rank(\mathbf{G})$.
Since $\mathbf{G}$ and $(\mathbf{G}^{M-1})^{-1}$ are full-rank matrices, $\Pr(\mathbf{G})=\Pr((\mathbf{G}^{M-1})^{-1})$ for all possible $\mathbf{G}$.
Hence (\ref{EQ:achievable_rate_multi}) is given by
\begin{eqnarray}
R\!\!\!\!\!\!\!\!\!&&=\frac{1}{K}\sum_{i=1}^{K_{\operatorname{min}}}i\!\!\!\sum_{\underset{\rank(\mathbf{G})= i}{\mathbf{G}\in\mathbb{F}^{i\times i}_2,}}\!\!\Pr(\mathbf{G})-\delta^*_{n_B}\nonumber\\
&&=\frac{1}{K}2^{-K_{m_0+1}K_{m_0}}\sum_{i=1}^{K_{\operatorname{min}}}iN_{K_{m_0+1},K_{m_0}}(i)-\delta^*_{n_B}\nonumber\\
&&=\frac{1}{K}2^{-K_{m_0+1}K_{m_0}}\!\!\!\!\!\!\sum_{\mathbf{H}\in\mathbb{F}_2^{K_{m_0+1}\times K_{m_0}}}\!\!\!\!\!\rank(\mathbf{H})-\delta^*_{n_B},
\end{eqnarray}
where the second equality holds from (\ref{EQ:p_G_equal_prob}) and the fact that $\sum_{\mathbf{G}\in\mathbb{F}^{i\times i}_2,\rank(\mathbf{G})= i}1=N_{i,i}(i)$ and the third equality holds since $K_{\operatorname{min}}=\min\{K_{m_0},K_{m_0+1}\}$.
Since $\delta^*_{n_B}\to 0$ as $n_B\to\infty$ the achievable sum-rate $KR$ asymptotically coincides with the upper bound in (\ref{EQ:converse_multi}), which completes the proof.
\end{proof}

\begin{remark}
Theorem \ref{THM:cut_set} and Corollary \ref{CO:capacity_multi} can be extended to the $q$-ary case where inputs, outputs, and channels are in $\mathbb{F}_q$.
\label{RE:q_ary}
\end{remark}


\begin{center}
Acknowledgement
\end{center}

This work was in part supported by MKE / IITA under the program ``Next generation tactical information and communication network."


\begin{thebibliography}{10}
\providecommand{\url}[1]{#1}
\def\UrlFont{\rmfamily}
\providecommand{\newblock}{\relax} \providecommand{\bibinfo}[2]{#2}
\providecommand\BIBentrySTDinterwordspacing{\spaceskip=0pt\relax}
\providecommand\BIBentryALTinterwordstretchfactor{4}
\providecommand\BIBentryALTinterwordspacing{\spaceskip=\fontdimen2\font
plus \BIBentryALTinterwordstretchfactor\fontdimen3\font minus
  \fontdimen4\font\relax}
\providecommand\BIBforeignlanguage[2]{{%
\expandafter\ifx\csname l@#1\endcsname\relax
\typeout{** WARNING: IEEEtran.bst: No hyphenation pattern has been}%
\typeout{** loaded for the language `#1'. Using the pattern for}%
\typeout{** the default language instead.}%
\else \language=\csname l@#1\endcsname \fi #2}}


\bibitem{Etkin:08}
R. H. Etkin, D. Tse, and H. Wang, ``Gaussian interference channel capacity to within one bit,'' \emph{{IEEE} Trans. Inf. Theory}, vol. 54, pp. 5534--5562, Dec. 2008.

\bibitem{Niranjan:06}
N. Ratnakar and G. Kramer, ``The multicast capacity of deterministic relay networks with no interference,'' \emph{{IEEE} Trans. Inf. Theory}, vol. 52, pp. 2425--2432, June 2006.

\bibitem{Amir:06}
A. F. Dana, R. Gowaikar, R. Palanki, B. Hassibi, and M. Effros, ``Capacity of wireless erasure networks,'' \emph{{IEEE} Trans. Inf. Theory}, vol. 52, pp. 789--804, Mar. 2006.

\bibitem{Smith:07}
B. Smith and S. Vishwanath, ``Unicast transmission over multiple access erasure networks: Capacity and duality,'' in \emph{Proc. IEEE Information Theory Workshop}, Lake Tahoe, CA, Sept. 2007.

\bibitem{AvestimehrDiggaviTse:07}
A. S. Avestimehr, S. N. Diggavi, and D. Tse, ``Wireless network information flow,'' in \emph{Proc. 45th Annu. Allerton Conf. Communication, Control, and Computing}, Monticello, IL, Sept. 2007.

\bibitem{Mohajer:08}
S. Mohajer, S. N. Diggavi, C. Fragouli, and D. Tse, ``Transmission techniques for relay-interference networks,'' in \emph{Proc. 46th Annu. Allerton Conf. Communication, Control, and Computing}, Monticello, IL, Sept. 2008.

\bibitem{Bhadra:06}
S. Bhadra, P. Gupta, and S. Shakkottai, ``On network coding for interference networks,'' in \emph{Proc. {IEEE} Int. Symp. Information Theory}, Seattle, WA, July 2006.

\bibitem{Viveck1:08}
V. R. Cadambe and S. A. Jafar, ``Interference alignment and degrees of freedom of the K-user interference channel,'' \emph{{IEEE} Trans. Inf. Theory}, vol. 54, pp. 3425--3441, Aug. 2008.

\bibitem{Nazer:09}
B. Nazer, M. Gastpar, S. A. Jafer, and S. Vishwanath, ``Ergodic interference alignment,'' in \emph{arXiv:cs.IT/0901.4379}, Jan. 2009

\bibitem{JeonITA:09}
S.-W. Jeon and S.-Y. Chung, ``Capacity of a class of multi-source relay networks,'' in \emph{Information Theory and Applications Workshop}, San Diego, CA, Feb. 2009.

\bibitem{Jeon:09}
-------, ``Capacity of a class of multi-source relay networks,'' in preparation.

\bibitem{Csiszar:81}
I. Csisz{\'a}r and J. K{\"o}rner, \emph{Information Theory: Coding Theorems for Discrete Memoryless Systems.} New York: Academic Press, 1981.


\end{thebibliography}

\end{document}